\def\R{\mathbb{R}} 
\def\C{\mathbb{C}} 
\def\x{\mathbf{x}} 
\def\vv{\mathbf{v}} 
\def\u{\mathbf{u}}
\def\y{\mathbf{y}}
\def\w{\mathbf{w}}
\def\Ww{$W_{\|\cdot\|_2}$}
\newtheorem{definition}{Definition}
\newtheorem{theorem}{Theorem}
\newtheorem{Proposition}{Proposition}
\newtheorem{Lemma}{Lemma}
\journal{Signal Processing}
\begin{document}

\begin{frontmatter}

\title{Classifying Sums of Exponentially Damped Sinusoids Using an Associated  Numerical Range}
%
%


\author[fiuba]{Raymundo~Albert\corref{cor1}}
\ead{araymundo@fi.uba.ar}

\author[csc,fiuba]{Cecilia~Galarza\corref{cor1}} 
\ead{cgalarza@csc.conicet.gov.ar}

\cortext[cor1]{Corresponding author}
\address[csc]{Centro de simulaci\'on computacional, CONICET, Godoy Cruz 2390, C1425FQD Buenos Aires, Argentina}
\address[fiuba]{Facultad de Ingenier\'ia, Universidad de Buenos Aires, Av. Paseo Colón 850, C1063ACV Buenos Aires, Argentina}

\begin{abstract}
The matrix pencil method (MPM) is a well-known technique for estimating the parameters of exponentially damped sinusoids in noise by solving a generalized eigenvalue problem. However, in several cases, this is an ill-conditioned problem whose solution is highly biased under small perturbations. When the estimation is performed to classify the observed signal into two categories, the estimation errors induce several misclassifications. In this work we propose a novel signal classification criteria by exploiting the relationship between the generalized eigenvalue problem posed in the MPM and the numerical range of a pair of rectangular matrices. In particular, the classification test is formulated as a set inclusion problem, and no spectrum estimation is required. The technique is applied to a problem of electromagnetic scattering to classify dielectric materials using the scattering signal observed when a target is illuminated by an ultra-wideband signal. The performance of the classification scheme is assessed in terms of error rate and it is compared to another classification technique, the generalized likelihood rate test (GLRT). 
\end{abstract}

\begin{keyword}
Target classification \sep matrix pencil method \sep numerical range \sep singularity expansion method
\end{keyword}

\end{frontmatter}

\section{Introduction}

Problems dealing with a mixture of closely spaced sinusoidal signals in a noisy environment are regularly observed in engineering. This is the case of the scattering collected from a dielectric target illuminated by an Ultra-WideBand (UWB) signal, where the particular mixture is a feature that characterizes the target. Identification of the complex frequencies embedded in the scattered signal becomes relevant for target identification and classification. Motivated by this problem, we have designed a novel strategy for classifying mixtures of sinusoidal signals. In particular, this procedure exploits the underlying structure among the complex frequencies of the observed signal without actually identifying their values. 

\subsection{Related Work}

Traditional approaches to this problem include the matrix pencil method (MPM)\cite{Hua1990}. This is a spectrum estimation technique that constructs a generalized eigenvalue problem from a partition of the Hankel matrix of the observed signal. In this setup, the matrix pencil is non-square and the eigenvalue computation can become unstable or the solutions may fail to exist. This problem is particularly acute when the number of sinusoids in the original mixture is very large and the complex frequencies are close to each other in the complex plane.  These facts make very difficult to accurately identify the parameters of the exponentially damped sinusoidal signals.

An alternative was studied in \cite{Mooney1998,Chen2014}, where a generalized likelihood test was posed to classify perfectly conducting objects using scattered signals. 
 Nevertheless, GLRT is sensitive to unmodelled dynamics, which could be the case when a large number of sinusoids form the observed mixture. 

On the other hand, research on eigenvalue computation for a matrix leads to the analysis of its numerical range, which is a set in the complex plane that contains the spectrum of the matrix \cite{Horn1991,Bonsall1971}. Analysis of the numerical range of matrices has been an active field of research in the past. Some of those results have set the ground for the classification criteria that we present in this paper. In particular, we return to the definition for the numerical range of rectangular matrices introduced in \cite{Chorianopoulos2009}. The authors of this paper defined the numerical range of a pair of rectangular matrices that resulted in a compact and convex set of the complex plane. 

\subsection{Contributions}


Making a hinge on the MPM, in particular on the associated generalized eigenvalue problem, we formulate a signal classification criteria using the numerical range for the matrices of the matrix pencil. In particular, the classification is formulated as a set inclusion problem, hence avoiding the need to compute eigenvalues of probably large ill-conditioned matrices. 

We start our paper with a review of the MPM in section \ref{Sect:MPM}. In particular, we discuss how the MPM poses a generalized eigenvalue problem that may be complicated to solve accurately. In section \ref{Sect:NR}, we reproduce the definition for the numerical range of a pair of rectangular matrices and we show the properties that are relevant to our development. Along these lines, we show that the rectangular numerical range of a pair of matrices contains the corresponding generalized eigenvalues. Making ground on this observation, in section \ref{Sect:Main_Results} we propose a new classification technique that is solved as a set inclusion problem. To check the validity of our approach, we analyze in section \ref{sec:NumRange} the behavior of the rectangular numerical range related to noisy signals for different values of the signal to noise ratio (SNR). Finally, in section \ref{Sect:Num_results} we show numerical results when classifying sums of exponentially damped sinusoids. Also, we apply the new procedure to scattering signals from different dielectric materials and compare its performance to an already known classification technique, the Generalized Likelihood Ratio Test (GLRT).

\section{Matrix Pencil Method}
\label{Sect:MPM}

	The Matrix Pencil Method (MPM) has been used as a high-resolution spectrum estimation technique on different applications for several years now \cite{Li1997,Lu1998}. In this section, we review briefly the main aspects of this technique. 
	
	Suppose that we want to analyze a signal that is composed by a sum of damped complex exponentials perturbed by noise. We assume that different looks of a single experiment are acquired. For example, in the case of the electromagnetic scattering, an array of antennas spatially distributed may be collecting the scattered signals from different observation angles. On a general framework, we consider the signal vector $\y_t$ with $K$ components
	
	\begin{equation}
		\y_t = \sum_{i=1}^{M} \mathbf{c}_i z_i^t + \mathbf{w}_t, \qquad t = 0,1, \ldots
		\label{Eq:MPM1}
	\end{equation}

\noindent where $K$ is the number of looks, $M$ is the number of exponentials, $z_i \in \C$ is a complex resonant frequency, and $\mathbf{c}_i\in \C^K$ is the vector of residues associated to $z_i$.  

Now, for two integers $s$ and $n$, $s>n$, consider $s+n-1$ samples of $\y_t$. Let $m=sK$ and define $H_{\y}$ as the following $(m \times (n+1))$-block Hankel matrix

	\begin{equation}
		H_{\y} = \begin{bmatrix}
				\y_0       & \y_1     & \cdots & \y_{n}\\[0.3em]
				\y_1       & \y_2     & \cdots & \y_{n+1}\\[0.3em]
				\vdots     & \vdots   & \ddots & \vdots\\[0.3em]
				\y_{s-1} & \y_{s} & \cdots & \y_{s+n-1}
			\end{bmatrix} \in \C^{m\times (n+1)}.
		\label{Eq:MPM3}
	\end{equation}

	The integer $n$ is known as the pencil parameter and it is a resource to mitigate the effect of $\w_t$ on $\y_t$\footnote{It has been shown that when using the MPM for estimating $z_i$, its variance is minimum for values of $n$ between $s/2$ and $2s$ \cite{Hua1990}.}, provided that $M\leq n \leq s-M$. 
	
	A well-known result states that in the noiseless case, the rank of $H_{\y}$ is $M$ \cite{Gantmacher1960}. To show that, recall that the entries of $H_{\y}$ satisfy the following recurrence relation of order $M$ 
	\begin{equation}
		\y_t = \sum_{k = 1}^{M}a_k\y_{t-k},	\quad t = M, M+1,\ldots ,
		\label{Eq:MPM9}
	\end{equation}
and $M$ is the least integer number satisfying \eqref{Eq:MPM9} for all $t$. Hence, every column of $H$ is a linear combination of the first $M$ columns, and $H$ has indeed rank equal to $M$. 
	
	Define now two $(m\times n)$-block matrices, $A_{\y}$ and $B_{\y}$, where the first one is obtained by deleting the first column of $H_{\y}$, and the second one is obtained from $H_{\y}$ by deleting its last column. A matrix pencil is then defined as 
	\begin{equation}
		A_{\y} - \lambda B_{\y}.
		\label{Eq:MPM10}
	\end{equation}

	When the signal $\y_t$ is noiseless, the rank of the pencil \eqref{Eq:MPM10} is $M$ as long as $\lambda \neq z_i$ , $i = 1,\ldots,M$. When  $\lambda = z_i$, the rank of the pencil is reduced by one. Furthermore, $z_i, i=1, \ldots, M$ can be computed as the generalized eigenvalues of $(A_{\y}, B_{\y})$. Alternatively, when $B_{\y}$ is full rank, $z_i$ solves the eigenvalue problem 
		\begin{equation}
		B_{\y}^\dagger A_{\y} - \lambda I_n,
		\label{Eq:MPM10a}
	\end{equation}
\noindent where $B_{\y}^\dagger$ is the pseudoinverse and  $I_n$ is the identity matrix of order $n$. This is the principle for using the MPM for estimating the values of $z_i$. 

However, when considering noisy signals, the pencil \eqref{Eq:MPM10} has usually full rank for any complex scalar $\lambda$. To overcome this problem, several procedures have been proposed in the past. All of them obtain an approximation for the original Hankel matrix before computing the pencil. A first simple approach was followed in \cite{Hua1991}, where the singular value decomposition of the original Hankel matrix was computed and only the directions corresponding to the dominant singular values were kept. The authors in  \cite{Li1997,Lu1998} noticed that this procedure did not guarantee a Hankel structure for the resulting approximation of $H_{\y}$. Following \cite{Cadzow1988}, they proposed an iterative procedure that kept the Hankel structure in the approximated matrix. Alternatively, in \cite{Park1999,Dicle2013}, the problem was tackled by solving a total least-squares problem. In both procedures, knowledge of $M$ was key, and it had to be estimated when not known a priori. 

Nevertheless, even when the value of $M$ is known, the computation of generalized eigenvalues may be a delicate problem. For a better understanding of this issue, and following \cite{Boutry2005}, we define the function $g : \C\to\R_{\geq 0}$
		
		\begin{equation}
			g(\lambda) = \sigma_{min}\{A_{\y}-\lambda B_{\y}\}
		\end{equation}
		where $\sigma_{min}$ is the smallest singular value. For a pencil $A_{\y}-\lambda B_{\y}$ that has a rank-reducing solution at $\lambda = z$, $g(z)=0$. Therefore, the generalized eigenvalues of $(A_{\y},B_{\y})$ are the local minima of this function. 
	
		Consider now a signal as in \eqref{Eq:MPM1} with $K=1$ and $M=4$, under two scenarios: one without noise ($\w_t=0$), and one with noise ($\w_t\neq 0$). Using $n=M$, we build $g^{noiseless}(\lambda)$ from the noiseless signal and $g^{noisy}(\lambda)$ from the noisy one. Fig. \ref{Fig:SingularValueMin_Noiseless} shows the level curves of function $g^{noiseless}(\lambda)$. Observe that in this case, its minima coincide with the location of $z_i, i=1, \ldots 4$.  
		
		\begin{figure}[ht]
			\centering
			\includegraphics[width = .6\textwidth]{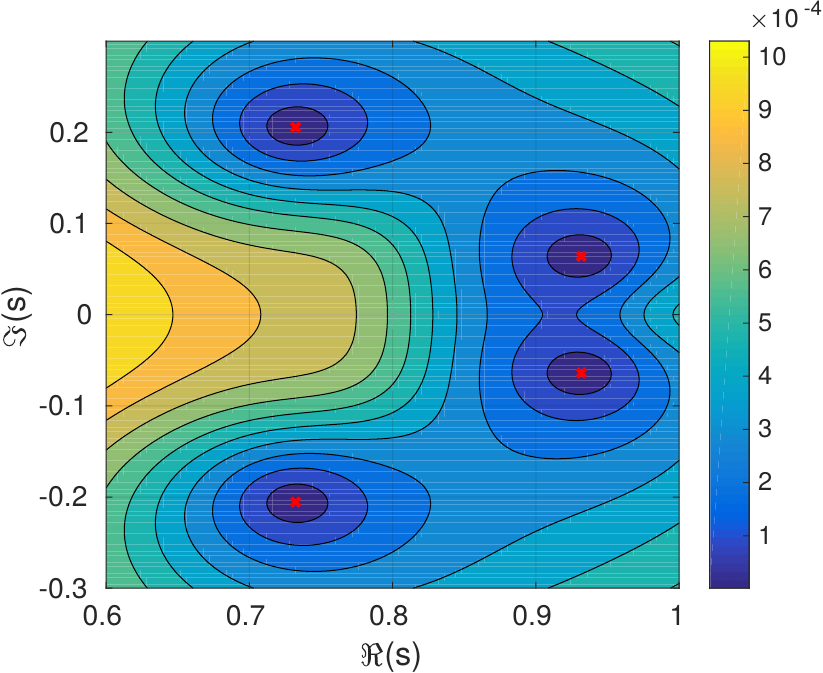}
			\caption{Function $g^{noiseless}(\lambda)$, and $z_1, \ldots z_4$ ('$\times$').}
			\label{Fig:SingularValueMin_Noiseless}
		\end{figure} 
		
	Now in Fig. \ref{Fig:SingularValueMin_Noise}, we have plotted the level curves for $g^{noisy}(\lambda)$ for a signal to noise ratio 30 dB. Notice that in this case, the local minima move away from the actual values of $z_1$ through $z_4$. It is clear from this very simple example that there are major difficulties that we may face when identifying the oscillation modes from noisy signals. These difficulties are even more serious when the value of $M$ is unknown and the order of the model needs to be identified beforehand \cite{Stoica2004,Gavish2014}.
			
		\begin{figure}[ht]
			\centering 
			\includegraphics[width = .6\textwidth]{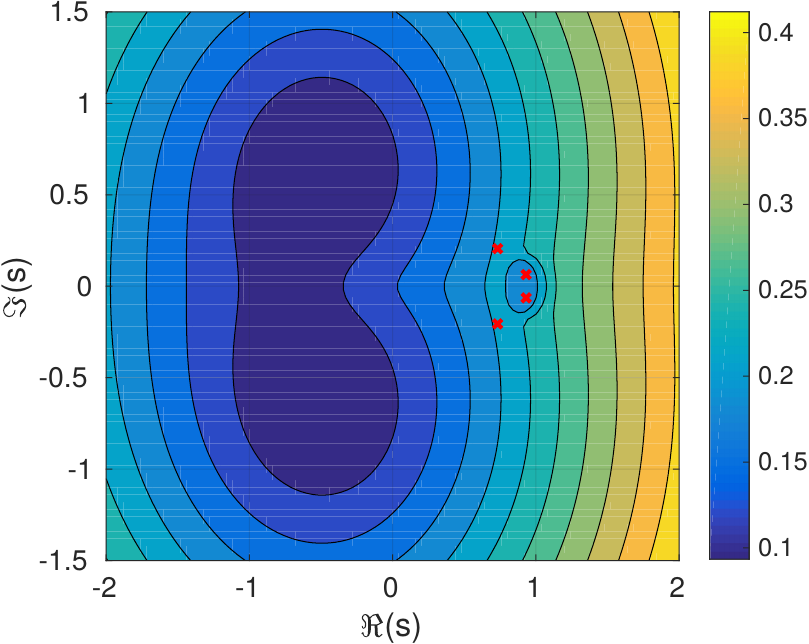}
			\caption{Function $g^{noisy}(\lambda)$, and $z_1, \ldots z_4$ ('$\times$').}
			\label{Fig:SingularValueMin_Noise}
		\end{figure}

\section{The numerical range of a matrix pencil}\label{Sect:NR}

As shown previously, solving \eqref{Eq:MPM1} may be difficult when dealing with perturbed matrices. Nevertheless, it is interesting to define a region on the complex plane where the solutions would be. A candidate for that is the numerical range associated to $(A_{\y}$, $B_{\y})$. In this section, we present background material necessary to present the main result of the paper. It is not our intention to provide a thorough analysis of the properties and problems related to the numerical range of a matrix. For an in-depth tutorial on this matter, we refer the reader to \cite{Horn1991}. 

Let $A$ be a complex $n\times n$-matrix. Its numerical range or field of values is defined as
	\begin{equation}
		W(A) := \big\{\x^H A\x : \x\in\C^n, \x^H\x = 1\big\} \subset \C
		\label{Eq:NumRange}.
	\end{equation}
	

\textbf{Properties of $W(A)$}
\begin{enumerate}
\item All the eigenvalues of $A$ lie in $W(A)$. 
\item $W(A)$ is a closed convex set.
\item It has the form \cite{Bonsall1971}
	\begin{equation}
		\begin{aligned} 
			W(A) &= \big\{\theta\in\C:\|A-\lambda I_n\|_2 \geq |\theta - \lambda|,\forall\lambda\in\C\big\}.
		\end{aligned}
		\label{Eq:NumRange2}
	\end{equation}
\end{enumerate}
	
\noindent where $\|\cdot\|_2$ is the spectral norm. The set $W(A)$ appears naturally when recursively computing the eigenvalues of $A$. For instance, the Arnoldi recurrence, which is the universally used algorithm for this matter\footnote{A variant of the Arnoldi algorithm has been implemented in ARPACK.}, approximates the sought eigenvalues with elements of $W(A)$. 
	
The concept of eigenvalues is generalized when dealing with two rectangular matrices. Let $A,B\in\C^{m\times n}$. A scalar $\beta$ is a generalized eigenvalue of $(A,B)$ if there exists $\x\in \C^n$ such that
	\begin{equation}
		A\x = \beta B\x, \qquad \x \neq \mathbf{0}.
		\label{Eq:Pencil}
	\end{equation}
Following \eqref{Eq:NumRange2}, we introduce the definition for the numerical range of the pencil $(A;B)$ as
	\begin{align}
			W_{\|\cdot\|_2}(A;B) & = \big\{\theta\in\C:\|A-\lambda B\|\geq |\theta - \lambda|,\forall \lambda\in\C\big\} \nonumber \\
			& =  \bigcap_{\lambda\in\C}\mathcal{D}(\lambda,\|A-\lambda B\|).
		\label{Eq:NumRange3}
	\end{align}
This definition was first proposed in \cite{Chorianopoulos2009}. We state, without proofs, some properties of $W_{\|\cdot\|_2}(A;B)$. 

\begin{Proposition}
$W_{\|\cdot\|_2}(A;B)$ is non-empty if and only if $\|B\|\geq 1$. 
\label{Prop:WnonEmpty}
\end{Proposition}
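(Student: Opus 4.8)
The plan is to establish the two implications separately. The forward direction (non-emptiness $\Rightarrow \|B\|\ge 1$) is a short growth estimate, while the converse is handled by writing down an explicit point of the set, built from a leading singular pair of $B$.

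For the forward direction, suppose some $\theta\in W_{\|\cdot\|_2}(A;B)$, so that by definition $|\theta-\lambda|\le\|A-\lambda B\|$ for every $\lambda\in\C$. I would bound the right-hand side crudely as $\|A-\lambda B\|\le\|A\|+|\lambda|\,\|B\|$ and specialise to real $\lambda=t>0$, so that the reverse triangle inequality gives $t-|\theta|\le|\theta-t|\le\|A\|+t\,\|B\|$, i.e. $t\,(1-\|B\|)\le\|A\|+|\theta|$ for all $t>0$. Since the right-hand side is a fixed finite number, letting $t\to\infty$ is incompatible with $1-\|B\|>0$; hence $\|B\|\ge 1$.

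For the converse, assume $\|B\|\ge 1$ (in particular $B\neq 0$). Let $\sigma=\|B\|=\sigma_{\max}(B)$ and choose unit vectors $\u\in\C^{n}$ and $\vv\in\C^{m}$ forming a top singular pair, so that $B\u=\sigma\vv$ and therefore $\vv^{H}B\u=\sigma$. I claim that
$$\theta:=\frac{\vv^{H}A\u}{\sigma}$$
belongs to $W_{\|\cdot\|_2}(A;B)$. Indeed, for any $\lambda\in\C$ one has $\vv^{H}(A-\lambda B)\u=\vv^{H}A\u-\lambda\sigma=\sigma(\theta-\lambda)$, and since $\u,\vv$ are unit vectors,
$$\|A-\lambda B\|\ \ge\ \big|\vv^{H}(A-\lambda B)\u\big|\ =\ \sigma\,|\theta-\lambda|\ \ge\ |\theta-\lambda|,$$
the last inequality because $\sigma\ge 1$. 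Thus $\theta\in\mathcal{D}(\lambda,\|A-\lambda B\|)$ for every $\lambda$, i.e. $\theta$ lies in the intersection defining $W_{\|\cdot\|_2}(A;B)$, which is therefore non-empty.

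I expect the only subtle step to be pinning down the witness $\theta$ in the converse. The obvious candidate $\vv^{H}A\u$ does not work when $\sigma>1$ — it generally escapes the disk $\mathcal{D}(\lambda,\|A-\lambda B\|)$ for $\lambda$ close to $\vv^{H}A\u/\sigma$ — and dividing by $\sigma$ is precisely what makes the factor $\sigma\ge 1$ act in our favour in the displayed chain of inequalities. No convexity, minimax, or Helly-type machinery is needed, although a less constructive argument could instead combine compactness of $\mathcal{D}(0,\|A\|)$ with Helly's theorem in the plane.
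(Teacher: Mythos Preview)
Your argument is correct in both directions. The forward implication follows exactly the growth estimate sketched in the paper: assume $\theta$ lies in the set, bound $\|A-\lambda B\|\le\|A\|+|\lambda|\,\|B\|$, and observe that $|\lambda|(1-\|B\|)$ must stay bounded as $|\lambda|\to\infty$, forcing $\|B\|\ge1$. Your version is slightly cleaner (restricting to real $\lambda=t$), but the idea is identical.

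For the converse, the paper does not argue at all---it simply says the sufficiency ``can be proved'' and defers to \cite{Chorianopoulos2009}. Your constructive witness $\theta=\vv^{H}A\u/\sigma$, built from a top singular pair of $B$, is a genuine addition: it gives a one-line verification that $\|A-\lambda B\|\ge|\vv^{H}(A-\lambda B)\u|=\sigma\,|\theta-\lambda|\ge|\theta-\lambda|$ for every $\lambda$, with the assumption $\sigma\ge1$ used precisely at the last step. This is more informative than the paper's treatment, since it identifies an explicit point of $W_{\|\cdot\|_2}(A;B)$ and makes clear that no compactness or Helly-type argument is needed. Your remark about why the na\"{\i}ve candidate $\vv^{H}A\u$ fails when $\sigma>1$ is also on the mark.
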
	
Notice that from  \eqref{Eq:NumRange3}, we have that $\forall \theta \in W_{\|\cdot\|_2}(A;B)$, $|\theta|\leq \|A\|_2$. Moreover, for $\lambda$ such that $|\theta| < |\lambda|$, the following is true
\[ |\lambda|-|\theta| \leq \|A-\lambda B\|_2\leq \|A\|_2+|\lambda|\|B\|_2.
\]
Hence $|\lambda|(1-\|B\|_2) \leq 2\|A\|_2$. If $\|B\|_2<1$, this is satisfied by $\lambda$ on a bounded set only, and not on $\C$. Therefore, $\theta \notin W_{\|\cdot\|_2}(A;B)$. Then if $\|B\|_2<1$, for any $\theta \in W_{\|\cdot\|_2}(A;B)$ it is possible to find $\lambda$ that does not satisfy \eqref{Eq:NumRange3}. We conclude that $W_{\|\cdot\|_2}(A;B)=\emptyset$. It can be proved that $\|B\|_2>1$ is also a necessary condition for non-emptiness.
		
\begin{Proposition}
Any generalized eigenvalue $\beta$ of $(A,B)$ with associated eigenvector $\x\in \C^n$ such that $\|B\x\|\geq 1$, lies in $W_{\|\cdot\|_2}(A;B)$.
\label{Th:NumRange1}
\end{Proposition}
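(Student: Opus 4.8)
The plan is to verify directly that $\beta$ satisfies the defining inequality of \eqref{Eq:NumRange3}, i.e. that $\|A-\lambda B\|_2\geq|\beta-\lambda|$ for \emph{every} $\lambda\in\C$. First I would fix a unit-norm representative of the eigenvector, $\|\x\|_2=1$, so that the hypothesis reads $\|B\x\|_2\geq 1$ (note this forces $B\x\neq\mathbf{0}$, since otherwise $\|B\x\|_2=0<1$). Because an eigenvector of $(A,B)$ is determined only up to a nonzero scalar, this normalization costs nothing, and it is the natural one to pair with the variational description of the spectral norm.

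Next, for an arbitrary $\lambda\in\C$, I would use the eigen-relation $A\x=\beta B\x$ to evaluate the shifted pencil on $\x$:
\[
(A-\lambda B)\x \;=\; A\x-\lambda B\x \;=\; \beta B\x-\lambda B\x \;=\; (\beta-\lambda)\,B\x .
\]
Taking Euclidean norms and invoking the hypothesis $\|B\x\|_2\geq 1$ then gives
\[
\|(A-\lambda B)\x\|_2 \;=\; |\beta-\lambda|\,\|B\x\|_2 \;\geq\; |\beta-\lambda| .
\]

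Finally I would use the fact that the spectral norm is the largest singular value, hence $\|A-\lambda B\|_2=\max_{\|\u\|_2=1}\|(A-\lambda B)\u\|_2\geq\|(A-\lambda B)\x\|_2$ because $\|\x\|_2=1$. Combining this with the previous display yields $\|A-\lambda B\|_2\geq|\beta-\lambda|$, and since $\lambda\in\C$ was arbitrary this is exactly the membership condition in \eqref{Eq:NumRange3}, so $\beta\in W_{\|\cdot\|_2}(A;B)$.

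I do not expect a genuine obstacle in this argument; the one point that needs care is the normalization step, namely ensuring the hypothesis $\|B\x\|_2\geq 1$ is applied to a unit-norm eigenvector, so that the operator-norm inequality $\|A-\lambda B\|_2\geq\|(A-\lambda B)\x\|_2$ is legitimately available. Equivalently, one can state the condition scale-invariantly as $\|B\x\|_2\geq\|\x\|_2$; everything else then follows in one line from the eigen-relation and the definition of the rectangular numerical range.
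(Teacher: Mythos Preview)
Your argument is correct. The paper explicitly presents this proposition \emph{without proof} (it writes ``We state, without proofs, some properties of $W_{\|\cdot\|_2}(A;B)$''), so there is no in-paper argument to compare against; what you wrote is the standard one-line verification and is exactly what the cited reference \cite{Chorianopoulos2009} does. Your caveat about normalization is well placed: as stated, the hypothesis $\|B\x\|\geq 1$ only makes sense for a unit eigenvector (equivalently, the scale-invariant form $\|B\x\|_2\geq\|\x\|_2$), since otherwise it can always be arranged by rescaling whenever $B\x\neq 0$; with that reading your chain $\|A-\lambda B\|_2\geq\|(A-\lambda B)\x\|_2=|\beta-\lambda|\,\|B\x\|_2\geq|\beta-\lambda|$ is complete.
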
			
			
\noindent Here, the condition $\|B\x\|\geq 1$ is only sufficient but not necessary. Experimentally, it has been verified that the generalized eigenvalues lied in $W_{\|\cdot\|_2}(A;B)$ although $\|B\x\| < 1$. 

\begin{Proposition}
Let $\alpha\in \C$ such that $|\alpha|\geq 1$. Then for any $A,B \in \C^{m\times n}$,
\[
W_{\|.\|_2}(A,B)\subset W_{\|.\|_2}(\alpha A,\alpha B).
\]
\end{Proposition}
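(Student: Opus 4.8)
The plan is to argue directly from the defining formula \eqref{Eq:NumRange3}, exploiting the absolute homogeneity of the spectral norm. Fix $\theta \in W_{\|\cdot\|_2}(A;B)$, which by definition means $\|A-\lambda B\|_2 \geq |\theta-\lambda|$ for every $\lambda\in\C$. The goal is to show $\theta \in W_{\|\cdot\|_2}(\alpha A;\alpha B)$, i.e. $\|\alpha A - \lambda\alpha B\|_2 \geq |\theta-\lambda|$ for every $\lambda\in\C$. First I would pull the scalar out of the norm: for each fixed $\lambda$, $\|\alpha A - \lambda\alpha B\|_2 = \|\alpha(A-\lambda B)\|_2 = |\alpha|\,\|A-\lambda B\|_2$. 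Since $|\alpha|\geq 1$ and $\|A-\lambda B\|_2\geq 0$, this gives $\|\alpha A - \lambda\alpha B\|_2 = |\alpha|\,\|A-\lambda B\|_2 \geq \|A-\lambda B\|_2 \geq |\theta-\lambda|$, where the last inequality is the hypothesis on $\theta$. As $\lambda$ was arbitrary, $\theta \in W_{\|\cdot\|_2}(\alpha A;\alpha B)$, which is the claimed inclusion.

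Equivalently, and perhaps more transparently, I would use the intersection-of-disks form in \eqref{Eq:NumRange3}. For each $\lambda\in\C$ we have $\mathcal{D}(\lambda,\|A-\lambda B\|_2) \subseteq \mathcal{D}(\lambda,|\alpha|\,\|A-\lambda B\|_2) = \mathcal{D}(\lambda,\|\alpha A - \lambda\alpha B\|_2)$, because enlarging the radius of a disk while keeping its center fixed only enlarges the disk, and $|\alpha|\geq 1$ scales radii upward. Taking intersections over $\lambda\in\C$ preserves inclusions, so
\[
W_{\|\cdot\|_2}(A;B) = \bigcap_{\lambda\in\C}\mathcal{D}\big(\lambda,\|A-\lambda B\|_2\big) \subseteq \bigcap_{\lambda\in\C}\mathcal{D}\big(\lambda,\|\alpha A - \lambda\alpha B\|_2\big) = W_{\|\cdot\|_2}(\alpha A;\alpha B),
\]
which completes the argument.

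There is essentially no substantive obstacle here: the statement reduces to the homogeneity of the operator norm together with the monotonicity of disk inclusion in the radius. The potential edge cases are handled automatically — if $W_{\|\cdot\|_2}(A;B)=\emptyset$ the inclusion is trivial, and if $\|A-\lambda B\|_2 = 0$ for some $\lambda$ the corresponding disk degenerates to a point but is still contained in the larger disk. The only point requiring a moment of care is to perform the substitution into \eqref{Eq:NumRange3} correctly, namely to note that the running parameter $\lambda$ in the definition of $W_{\|\cdot\|_2}(\alpha A;\alpha B)$ multiplies $\alpha B$ rather than $B$, after which the factor $|\alpha|$ factors out cleanly.
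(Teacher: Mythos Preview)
Your argument is correct. Note that the paper explicitly states this proposition (along with the neighboring ones) \emph{without proof}, so there is no proof in the paper to compare against; your direct use of the homogeneity of the spectral norm together with the disk-intersection description \eqref{Eq:NumRange3} is exactly the natural route and handles all edge cases as you observe.
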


\begin{Proposition}
\label{Prop:frob_norm}
When the Frobenious norm is used in \eqref{Eq:NumRange3}, and $\|B\|_F\geq 1$, 
	\begin{equation}
		W_{\|\cdot\|_F}(A;B) = \mathcal{D}\bigg(\frac{\langle A, B\rangle}{\|B\|_F^2},\bigg\|A - \frac{\langle A,B\rangle}{\|B\|_F^2}B\bigg\|_F\frac{\sqrt{\|B\|_F^2-1}}{\|B\|_F}\bigg).
		\label{Eq:NumRange4}
	\end{equation}
\end{Proposition}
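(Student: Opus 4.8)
The plan is to unwind the definition \eqref{Eq:NumRange3} (with $\|\cdot\|$ taken to be the Frobenius norm $\|\cdot\|_F$) into a family of quadratic inequalities indexed by $\lambda$, and then to eliminate $\lambda$ by an explicit minimization, thereby computing the intersection $\bigcap_{\lambda\in\C}\mathcal{D}(\lambda,\|A-\lambda B\|_F)$ directly. By \eqref{Eq:NumRange3}, a point $\theta$ lies in $W_{\|\cdot\|_F}(A;B)$ if and only if $|\theta-\lambda|^2\le\|A-\lambda B\|_F^2$ for every $\lambda\in\C$. Expanding the Frobenius norm through its inner product gives $\|A-\lambda B\|_F^2=\|A\|_F^2-2\,\mathrm{Re}\!\big(\bar\lambda\langle A,B\rangle\big)+|\lambda|^2\|B\|_F^2$, while $|\theta-\lambda|^2=|\theta|^2-2\,\mathrm{Re}(\bar\lambda\theta)+|\lambda|^2$. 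Subtracting, membership of $\theta$ is equivalent to
\[
q_\theta(\lambda):=\big(\|B\|_F^2-1\big)|\lambda|^2-2\,\mathrm{Re}\!\big(\bar\lambda(\langle A,B\rangle-\theta)\big)+\|A\|_F^2-|\theta|^2\ \ge\ 0\quad\text{for all }\lambda\in\C .
\]

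Next I would minimize $q_\theta$ over $\lambda$. Since the hypothesis $\|B\|_F\ge 1$ makes the coefficient $\|B\|_F^2-1$ of $|\lambda|^2$ nonnegative, $q_\theta$ is a convex real quadratic in $\lambda$, and when this coefficient is strictly positive its infimum is a genuine minimum. Assume first $\|B\|_F>1$. Completing the square, $q_\theta(\lambda)=\big(\|B\|_F^2-1\big)\big|\lambda-\lambda^\star\big|^2+\|A\|_F^2-|\theta|^2-\dfrac{|\langle A,B\rangle-\theta|^2}{\|B\|_F^2-1}$ with $\lambda^\star=(\langle A,B\rangle-\theta)/(\|B\|_F^2-1)$, so $q_\theta\ge 0$ on all of $\C$ if and only if
\[
|\langle A,B\rangle-\theta|^2\ \le\ \big(\|B\|_F^2-1\big)\big(\|A\|_F^2-|\theta|^2\big).
\]

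It then remains to recognize this inequality as a disk in $\theta$. Collecting terms it becomes $\|B\|_F^2|\theta|^2-2\,\mathrm{Re}\!\big(\bar\theta\langle A,B\rangle\big)+|\langle A,B\rangle|^2-(\|B\|_F^2-1)\|A\|_F^2\le 0$; dividing by $\|B\|_F^2$ and completing the square in $\theta$ yields
\[
\Big|\theta-\frac{\langle A,B\rangle}{\|B\|_F^2}\Big|^2\ \le\ \frac{\|B\|_F^2\|A\|_F^2-|\langle A,B\rangle|^2}{\|B\|_F^2}\cdot\frac{\|B\|_F^2-1}{\|B\|_F^2}.
\]
Finally I would verify that $\big\|A-\tfrac{\langle A,B\rangle}{\|B\|_F^2}B\big\|_F^2=\|A\|_F^2-\tfrac{|\langle A,B\rangle|^2}{\|B\|_F^2}=\tfrac{\|B\|_F^2\|A\|_F^2-|\langle A,B\rangle|^2}{\|B\|_F^2}$, so that the right-hand side above equals $\big\|A-\tfrac{\langle A,B\rangle}{\|B\|_F^2}B\big\|_F^2\,\tfrac{\|B\|_F^2-1}{\|B\|_F^2}$, which is exactly the square of the radius in \eqref{Eq:NumRange4}; this proves \eqref{Eq:NumRange4} when $\|B\|_F>1$. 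The boundary case $\|B\|_F=1$ I would dispatch directly: then $q_\theta(\lambda)=-2\,\mathrm{Re}\!\big(\bar\lambda(\langle A,B\rangle-\theta)\big)+\|A\|_F^2-|\theta|^2$ is affine in $\lambda$, hence nonnegative on all of $\C$ only if $\theta=\langle A,B\rangle$ and $|\langle A,B\rangle|\le\|A\|_F$; the latter holds automatically since $|\langle A,B\rangle|\le\|A\|_F\|B\|_F=\|A\|_F$ by Cauchy--Schwarz, so $W_{\|\cdot\|_F}(A;B)=\{\langle A,B\rangle\}$, precisely the degenerate disk of radius $0$ that \eqref{Eq:NumRange4} prescribes when $\|B\|_F=1$.

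I do not anticipate a genuine obstacle; the argument is essentially two successive completions of the square, first to eliminate $\lambda$ and then to exhibit the admissible set of $\theta$ as a disk. The delicate points are computational rather than conceptual: keeping the (sesquilinear) Frobenius inner product consistent throughout, noting that it is exactly the hypothesis $\|B\|_F\ge1$ that turns $q_\theta$ into a convex --- rather than a saddle-shaped --- quadratic in $\lambda$, so that the inner minimization is legitimate and yields an ``if and only if'', and the final algebraic identity showing that the two expressions for the radius coincide.
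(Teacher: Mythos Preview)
Your argument is correct: the two successive completions of the square are carried out accurately, the boundary case $\|B\|_F=1$ is handled cleanly, and the final identification of the radius via $\|A-\tfrac{\langle A,B\rangle}{\|B\|_F^2}B\|_F^2=\|A\|_F^2-\tfrac{|\langle A,B\rangle|^2}{\|B\|_F^2}$ is exactly what is needed. Note, however, that the paper does not supply its own proof of this proposition: it is listed among the properties that are ``state[d], without proofs'' and attributed to \cite{Chorianopoulos2009}. So there is nothing in the paper to compare your approach against; what you have written is a self-contained and valid derivation of the result that the paper merely quotes.
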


\noindent In this case, the numerical range is characterized by a closed disc centered at $\frac{\langle A, B\rangle}{\|B\|^2}$, which is a superset that contains $W_{\|\cdot\|_2}(A,B)$, because of the equivalence between $\|.\|_2$ and $\|.\|_F$. 

	\begin{Lemma}\label{TH:2}
		For any $A,B\in\C^{m\times n}$ $(m\geq n)$, $\|B\|_2 = 1$, it holds that \Ww$(A;B)\subseteq W(B^\dagger A)$.
		\label{Th:NumRange2}
	\end{Lemma}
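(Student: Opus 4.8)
The plan is to compare the two compact sets through their supporting half‑planes. Since $W(B^\dagger A)$ is a closed convex set (it is the numerical range of the $n\times n$ matrix $B^\dagger A$), it equals the intersection of all closed half‑planes containing it, so it suffices to show that an arbitrary $\theta_0\in W_{\|\cdot\|_2}(A;B)$ satisfies $\operatorname{Re}(\overline\psi\,\theta_0)\le\max_{w\in W(B^\dagger A)}\operatorname{Re}(\overline\psi\,w)$ for every unit $\psi\in\C$. By the definition $W(M)=\{x^HMx:x^Hx=1\}$ and the Rayleigh quotient, the right–hand side equals $\lambda_{\max}\!\big(\tfrac12(\overline\psi B^\dagger A+\psi A^H(B^\dagger)^H)\big)$. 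Replacing $A$ by $\overline\psi A$ — which replaces $B^\dagger A$ by $\overline\psi B^\dagger A$, keeps $\|B\|_2=1$, and transforms $W_{\|\cdot\|_2}(A;B)$ and $W(B^\dagger A)$ by $z\mapsto\overline\psi z$ (for the former because $\|\overline\psi A-\lambda B\|_2=\|A-\psi\lambda B\|_2$) — reduces everything to the case $\psi=1$: it is enough to prove $\operatorname{Re}(\theta_0)\le\lambda_{\max}\!\big(\tfrac12(B^\dagger A+A^H(B^\dagger)^H)\big)$ for every $\theta_0\in W_{\|\cdot\|_2}(A;B)$.

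To use the hypothesis I would evaluate the defining inequality of $W_{\|\cdot\|_2}(A;B)$ at $\lambda=-t$, $t>0$: this gives $\operatorname{Re}(\theta_0)+t\le|\theta_0+t|\le\|A+tB\|_2$, hence $\operatorname{Re}(\theta_0)\le\|A+tB\|_2-t$ for all $t>0$. The proof then rests on the asymptotic estimate
\[
\|A+tB\|_2\ \le\ t+\lambda_{\max}\!\big(\tfrac12(B^\dagger A+A^H(B^\dagger)^H)\big)+O(1/t)\qquad(t\to\infty),
\]
which, combined with the previous line and $t\to\infty$, yields the required inequality.

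For the norm estimate, fix a unit $x\in\C^n$ and expand $\|(A+tB)x\|_2^2=t^2\|Bx\|_2^2+2t\operatorname{Re}(x^HB^HAx)+\|Ax\|_2^2$. Since $\|B\|_2=1$, the value $1$ is the top eigenvalue of $B^HB$; let $V_1$ be its eigenspace, $P_1$ the orthogonal projector onto $V_1$, and — if $V_1\neq\C^n$ — let $\sigma<1$ be the next singular value of $B$ (the case $V_1=\C^n$ means $B$ has orthonormal columns and $B^\dagger=B^H$, where the estimate is immediate). Writing $x=P_1x+x_\perp$, one has $\|Bx\|_2^2\le\|P_1x\|_2^2+\sigma^2\|x_\perp\|_2^2=1-(1-\sigma^2)\|x_\perp\|_2^2$; and on $V_1$ the functionals $y\mapsto y^HB^H$ and $y\mapsto y^HB^\dagger$ coincide — both equal $(By)^H$, because on the singular directions with singular value $1$ the pseudoinverse acts exactly as the conjugate transpose — so $(P_1x)^HB^HA(P_1x)=(P_1x)^HB^\dagger A(P_1x)$, whose real part is at most $\lambda_{\max}\!\big(\tfrac12(B^\dagger A+A^H(B^\dagger)^H)\big)$. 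Bounding the remaining cross terms by $O(\|x_\perp\|_2)$ and then maximizing the resulting expression $t^2+2t\,\lambda_{\max}(\cdots)+O(1)-(1-\sigma^2)t^2\|x_\perp\|_2^2+O(t\|x_\perp\|_2)$ over $\|x_\perp\|_2\in[0,1]$ (its maximizer sits at $\|x_\perp\|_2=O(1/t)$ and contributes only $O(1)$) gives a bound uniform in $x$; a square root then yields the claimed estimate.

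The step I expect to be the main obstacle is precisely this uniform norm bound. The naive estimate $\|A+tB\|_2\le\|A\|_2+t$ is useless here, since a point of $W_{\|\cdot\|_2}(A;B)$ may have modulus up to $\|A\|_2$; one genuinely needs the fact that for large $t$ any unit vector nearly achieving $\|A+tB\|_2$ is forced into the top singular subspace $V_1$ of $B$, together with the observation that on $V_1$ the quadratic form $x\mapsto x^HB^HAx$ equals $x\mapsto x^HB^\dagger Ax$ and is therefore governed by $W(B^\dagger A)$. Rank deficiency of $B$ requires no special treatment in this formulation, because only $V_1$ — on which $B$ is automatically injective — enters, and $\|B\|_2=1$ ensures $V_1\neq\{0\}$.
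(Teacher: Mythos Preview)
Your argument is correct and takes a genuinely different route from the paper. The paper's proof is essentially a one-liner: it writes $A-\lambda B=B(B^{\dagger}A-\lambda I_n)$ (modulo a typo in the order of the factors), uses submultiplicativity together with $\|B\|_2=1$ to obtain $\|A-\lambda B\|_2\le\|B^{\dagger}A-\lambda I_n\|_2$ for every $\lambda$, and then the inclusion follows directly by comparing the disk-intersection descriptions \eqref{Eq:NumRange3} and \eqref{Eq:NumRange2}. This is far shorter than your supporting-half-plane reduction followed by the asymptotic estimate for $\|A+tB\|_2$.

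What your approach buys, however, is validity in full generality. The factorization $A-\lambda B=B(B^{\dagger}A-\lambda I_n)$ on which the paper relies holds only when $BB^{\dagger}A=A$, i.e.\ when $\mathrm{range}(A)\subseteq\mathrm{range}(B)$, an assumption not contained in the lemma's hypotheses. Your proof never factors $A-\lambda B$ globally through $B$; instead you localize to the top singular subspace $V_1$ of $B$, where the identification $B^{H}=B^{\dagger}$ is genuine, and you show that for large $t$ any near-maximizer of $\|(A+tB)x\|_2$ is forced into $V_1$ up to an $O(1/t)$ defect whose contribution is $O(1)$ after squaring. The price is substantially more work; the gain is that the rank-deficient and $\mathrm{range}(A)\not\subseteq\mathrm{range}(B)$ cases are handled, whereas the paper's short argument silently excludes them.
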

				
	\begin{proof}
		When $m>n$ and using the definition of the Moore-Penrose pseudoinverse $B^\dagger$, we have that
		\[ \begin{aligned}
				\text{\Ww}(A;B) & = \big\{\theta\in\C : \|(B^\dagger A-\lambda I_n)B \|_2\geq|\theta-\lambda|,\forall\lambda\C\big\}\\
				& \subseteq \{\theta\in\C : \|B^\dagger A - \lambda I_n\|_2\geq|\theta-\lambda|,\forall\lambda\in\C\} \\
				& = W(B^\dagger A),
			\end{aligned}
		\]
		where the last equality follows from \eqref{Eq:NumRange2}.
	\end{proof}
Recall that Eq. \eqref{Eq:MPM10a} formulates the generalized eigenvalue problem as a regular one using $B^\dagger A$. As a consequence, if $\beta$ in \eqref{Eq:Pencil} exists, it lies in $W(B^\dagger A)$. In the particular case of Lemma \ref{Th:NumRange2}, we have that \Ww$(A;B)\subseteq W(B^\dagger A)$.

\section{Main Result}\label{Sect:Main_Results}
		
\subsection{A Classification Problem}\label{Sect:CP}
\label{subsec:classproblem}

Our quest is for a criterion to classify pairs of rectangular matrices $(A,B)$ according to their generalized eigenvalues. A direct approach to this problem would compute the generalized eigenvalues. However, this would require lengthy computations when considering large size matrices, and the overall approach might fail to obtain a valid solution when using noisy experimental data. To construct a more robust strategy, we propose to use the set \Ww$(A;B)$ to characterize the behavior of  $(A;B)$, rather than the actual values of the eigenvalues. Resorting to a minor abuse of notation, we will say that a matrix class $\Theta$ is defined by a finite set of complex numbers that would also be called $\Theta$.

\begin{definition}\label{Def:1}
Let  $\Theta = \{\theta_{1},\theta_{2},\ldots,\theta_{p}\}\subset \C$. Consider two matrices $A,B \in C^{m\times n}$, with $\|B\|_2\geq 1$. We say that  $(A,B)$ is  of class $\Theta$ if and only if $\theta_{i} \in W_{\|\cdot\|_2}(A;B)$ for all $i = 1,\ldots,p$.  

The set $\{\theta_{1},\theta_{2},\ldots,\theta_{p}\}$ is the candidate set, $\Theta$ is the candidate class, and $(A;B)$ is the observed pair. 
\end{definition}

The classification procedure requires verifying whether $\theta_{i}$ is a member of $W_{\|\cdot\|_2}(A;B)$ or not. Next theorem shows how to answer this question by solving a minimization problem.

	\begin{theorem}\label{TH:1}
		Given $\theta\in\C$, define the function $f_\theta:\C\to\R$ 		
		\[
			f_\theta(\lambda) = \|A-\lambda  B\|_2-|\theta-\lambda|.
		\]
		Then, $\theta\in W_{\|\cdot\|_2}(A;B)$ if and only if 		
		\[	
			\inf_{\lambda\in\C} f_\theta(\lambda) \geq 0
		\]
	\end{theorem}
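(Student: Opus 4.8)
The plan is to recognize that this statement is, at bottom, a rewriting of the defining property \eqref{Eq:NumRange3} of $W_{\|\cdot\|_2}(A;B)$, combined with the elementary fact that a real-valued function is nonnegative everywhere exactly when its infimum is nonnegative. The first step is therefore to unpack membership: by \eqref{Eq:NumRange3}, $\theta\in W_{\|\cdot\|_2}(A;B)$ holds precisely when $\|A-\lambda B\|_2\geq|\theta-\lambda|$ for every $\lambda\in\C$, and in the notation of the statement this is exactly the assertion that $f_\theta(\lambda)\geq 0$ for all $\lambda\in\C$.

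The second step is to prove the equivalence
\[
\big(\,f_\theta(\lambda)\geq 0\ \text{ for all }\lambda\in\C\,\big)\ \Longleftrightarrow\ \Big(\inf_{\lambda\in\C}f_\theta(\lambda)\geq 0\Big).
\]
The left-to-right implication is immediate: if $f_\theta$ never takes negative values, then $0$ is a lower bound of its range, so its greatest lower bound satisfies $\inf_{\lambda\in\C}f_\theta(\lambda)\geq 0$. The converse is equally direct: for any fixed $\lambda\in\C$ one has $f_\theta(\lambda)\geq\inf_{\mu\in\C}f_\theta(\mu)\geq 0$. Concatenating this with the reformulation from the first step yields the theorem.

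To make the resulting problem operational — that is, to justify treating it as a minimization rather than as a bare infimum — I would append the observation that $f_\theta$ is continuous on $\C$, being the difference of the continuous maps $\lambda\mapsto\|A-\lambda B\|_2$ and $\lambda\mapsto|\theta-\lambda|$, and that in the only relevant case, where $W_{\|\cdot\|_2}(A;B)\neq\emptyset$, Proposition \ref{Prop:WnonEmpty} gives $\|B\|_2\geq 1$. Using the reverse triangle inequality for the spectral norm, $f_\theta(\lambda)\geq|\lambda|\,(\|B\|_2-1)-\|A\|_2-|\theta|$, so when $\|B\|_2>1$ we get $f_\theta(\lambda)\to+\infty$ as $|\lambda|\to\infty$; hence the infimum is attained on a compact disc and is a genuine minimum. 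This is what licenses replacing the set-inclusion test of Definition \ref{Def:1} by the finite-dimensional optimization in the statement.

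I do not expect a real obstacle here: the theorem is a reformulation, not a deep fact, and the two implications above are one line each. The only place that calls for some care is the coercivity estimate behind the ``attained minimum'' remark; it relies on the reverse triangle inequality for $\|\cdot\|_2$ and on the non-emptiness criterion $\|B\|_2\geq 1$ of Proposition \ref{Prop:WnonEmpty}, and it degenerates at $\|B\|_2=1$, where $f_\theta$ need not be coercive — but that boundary case does not affect the stated equivalence, only the guarantee that the infimum is achieved.
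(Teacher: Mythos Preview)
Your proposal is correct and follows essentially the same approach as the paper: unwind the definition \eqref{Eq:NumRange3} to get $f_\theta(\lambda)\geq 0$ for all $\lambda$, then pass to the infimum. Your additional remarks on continuity and coercivity (ensuring the infimum is attained when $\|B\|_2>1$) go beyond what the paper includes, but they are sound and do not alter the core argument.
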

				
	\begin{proof}
	Using \eqref{Eq:NumRange3}, we have that the following statements are equivalent:
		\[
			\begin{aligned}
				&\theta\in W_{\|\cdot\|_2}(A;B) \Leftrightarrow \\
				& \|A-\lambda  B\|_2\geq |\theta-\lambda|, \forall \lambda\in\C \Leftrightarrow\\
				& \|A-\lambda  B\|_2-|\theta-\lambda| \geq 0, \forall \lambda\in\C \Leftrightarrow\\
				& f_\theta(\lambda) \geq 0, \forall \lambda\in\C.
			\end{aligned}		
		\]
The result follows from here. 
	\end{proof}

\subsection{Classification of exponentially damped sinusoids}	
Suppose now that the measured outcome of an experiment is modeled as in \eqref{Eq:MPM1}. In particular, consider that the set of complex resonant frequencies, $\{z_i, i=1, \ldots, M\}$ , is a feature that characterizes the experimental observation. As such, we associate a class of observed signals to a set of complex frequencies. For instance, when observing a scattering phenomenon, the class of signals represents the material from which the illuminated target is made of. Then, given a finite sample collection of $\y_t$, the question at hand is whether this observation lies within the class defined a priori. 

To proceed along this path, we assume that we know a priori all or a subset of the complex resonant frequencies present in the signals of interest. That will be, we know $Z=\{z_i, i=1, \ldots p\}$, where $p \leq M$. Consider now that we collect $s+n-1$ samples of the observed signal $\y_t$ and we use them to build the Hankel matrix $H_{\y}$ and the associated matrices $A_{\y}\in C^{m\times n}$ and $B_{\y}\in C^{m\times n}$ as in section \ref{Sect:MPM}. A straightforward extension of Definition \ref{Def:1} leads to the following definition

\begin{definition}\label{Def:2}
Let $Z = \{z_{1},z_{2},\ldots,z_{p}\}\subset \C$. Consider the matrices $A_{\y},B_{\y} $ built from $\y_t, t=0, \ldots s+n-1$. We say that  $\y_t$ is  of class $Z$ if and only if $z_{i} \in W_{\|\cdot\|_2}(A_{\y};B_{\y})$ for all $i = 1,\ldots,p$.  

The set $Z$ is the candidate set that characterizes the candidate class, $\y_t$ is the observed signal, and $(A_{\y}, B_{\y})$ is the observed pair. 
\end{definition}

Definition \ref{Def:2} allows us to classify the observed signal without explicitly solving the generalized eigenvalue problem formulated in the MPM. We call this procedure the Classification by Rectangular Numerical Range (CRNR). Algorithm \ref{Algo} shows an implementation of this classification criteria.

\begin{algorithm}
	\begin{algorithmic}[1]
		\State {Given $\y_t$, $Z = \{z_{1},z_{2},\ldots,z_{p}\}$, and a scalar $D>1$}
		\State{Obtain a reduced-rank approximation for $H_{\y}$}
		\label{step:approxH}
		\State {Compute $(A_{\y},B_{\y})$ from $H_{\y}$ .}
		\If {$\|B\|_2 <1$}
			\State{$A_{\y}=\frac{D}{\|B_\y\|_2}A_{\y}$} \label{step:scaleA}
		\State {$B_{\y}=\frac{D}{\|B_\y\|_2}B_{\y}$ } \label{step:scaleB}		
		\EndIf 
		\For {$k = 1,\ldots,p$}
			\If {$r(A_{\y},B_{\y}) < |z_k-c(A_{\y},B_{\y})|$} \label{step:inclusion_normF}
				\State{$\y_t$ is not of class $Z$.}
				\Return
			\Else
				\State { $\delta_k = \inf_{\lambda\in\C} \|A_{\y}-\lambda  B_{\y}\|_2 - |z_{k}-\lambda|$}
				\If { $\delta_k < 0 $}
					\State {$\y_t$ is not of class $Z$.}
					\Return
				\EndIf
			\EndIf
		\EndFor
		\State {$\y_t$ is of class $Z$.}
	\end{algorithmic}
	\caption{}
	\label{Algo}
\end{algorithm}

When considering a single candidate class, we commit a classification error if $Z \subset  W_{\|\cdot\|_2}(A_{\y};B_{\y})$ given that $\y_t$ is not from class $Z$. When $c$ candidate classes are considered, we deal with $c$ sets $Z_k$, $k = 1,\ldots, c$. The construction of each one  depends on the signals being classified. On the other hand, when we observe $\y_t$ of class $Z_j$, and consider the candidate class $Z_k$, with $k\neq j$, we commit a classification error when $Z_k \subset  W_{\|\cdot\|_2}(A_{\y};B_{\y})$. This observarion will be used to characterize the performance of the classification criteria.



In the general case, the condition $\|B_{\y}\|_2\geq 1$ for having a non-empty rectangular numerical range for the pair $(A_{\y},B_{\y})$ cannot be guaranteed. However, an adequate scaling of $\y_t$ may solve this problem. On the other hand, since any complex frequency $z_i$ of $\y_t$ is also a complex frequency of $\alpha \y_t$ for any complex scalar $\alpha$, scaling is a suitable resource for applying the results of subsection \ref{subsec:classproblem} to the classification of a sum of complex exponentials. The appropriate scaling factor depends on the set $Z$. One may think that $\alpha$ is selected during a setup process and kept fixed afterward. 

A sensible step when dealing with noisy data is to obtain a reduced-rank approximation for $H_\y$. In this paper, we follow the ideas from \cite{Cadzow1988}. For completeness, we have included the procedure in \ref{appendix1}. We will analyze the relevance of this step in the following section. 

As mentioned in Proposition \ref{Prop:frob_norm}, $W_{\|.\|_2}$
is a subset of $W_{\|.\|_F}$, which is a disk as in \eqref{Eq:NumRange4}. Let $c(A_{\y},B_{\y})$ and $r(A_{\y},B_{\y})$ be the center and radius of that disk. Exploiting the structure of $A_\y$ and $B_\y$ we obtain the following closed-form expressions:
\begin{gather}
	c(A_{\y},B_{\y})=
	 \frac{\sum_{i=0}^{m+n-2}\eta_i\y_{i+1}^*\y_i}{\sum_{i = 0}^{m+n-2}\eta_i|\y_i|^2}
	\label{Eq:centre} \\
	r(A_{\y},B_{\y})=\bigg[
	\sum_{i = 0}^{m+n-2}\eta_i|\y_{i+1}|^2- \frac{|\sum_{i=0}^{m+n-2}\eta_i\y_{i+1}^*\y_i|^2}{\sum_{i = 0}^{m+n-2}\eta_i|\y_i|^2}\bigg]^{\frac{1}{2}}  \notag \\ 
 	\times 
 	 \sqrt{
 \frac{\sum_{i = 0}^{m+n-2}\eta_i|\y_i|^2-1}{\sum_{i = 0}^{m+n-2}\eta_i|\y_i|^2}},
 	 \label{Eq:radii}
\end{gather}
where $\eta_i$ is the number of elements in the $i-$th anti-diagonal of $A_\y$ (or $B_\y$). Any $z$ such that 
$|z-c(A_{\y},B_{\y})|>r(A_{\y},B_{\y})$, does not lie within $W_{\|.\|_F}(A_{\y},B_{\y})$. Hence $z \notin W_{\|.\|_2}(A_{\y},B_{\y})$. In Algorithm \ref{Algo} we include this simple check to quickly reject the candidate frequencies outside $W_{\|.\|_F}(A_{\y},B_{\y})$. Following Th. \ref{TH:1}, for any other candidate frequency, we compute
\[
\delta = \inf_{\lambda \in \C}\|A_{\y}-\lambda B_{\y}\|_2 -|z-\lambda|.
\]
If $\delta<0$, then $z\notin W_{\|.\|_2}(A_{\y},B_{\y})$.

\section{Numerical range for noisy signals}
\label{sec:NumRange}


The observed signal $\y_t$ has a noise component that will alter the boundaries of $W_{\|.\|_2}(A_\y, B_\y)$. In general, large noise terms tend to expand the boundaries of the numerical range. To mitigate this problem, we perform a reduced-rank approximation of $H_\y$ as in step \ref{step:approxH} in Algorithm \ref{Algo}. In this section, we analyze a particular example to show the relevance of this procedure by studing the behavior of $W_{\|.\|_F}(A_\y, B_\y)$ for different noise levels. 

We build a signal as in \eqref{Eq:MPM1}, using $K=1$, $M=10$, and $z_{i}\in Z_1$ defined as
\begin{equation}
\begin{aligned} 
	Z_1 =  \big\{& 0.4474 \pm 0.5822j; 0.4447 \pm 0.5782j; 0.4236 \pm 0.5874j;\\
	& 0.4166 \pm 0.5959j; 0.3871 \pm 0.5858j\big\}.\end{aligned} 			\label{Eq:Z1} 
\end{equation}
The noise term $\w_t$ is a white Gaussian process with known variance. The signal to noise ratio (SNR) is defined as usual as the ratio between the signal power to the noise power. For each SNR, we perform a Montecarlo experiment with 10000 realizations. For each record of $\{\y_t, t=0, \ldots, s+n-1\}$, we compute \eqref{Eq:centre} and \eqref{Eq:radii}.

Figures \ref{Fig:Centre} and \ref{Fig:radius}  show the ensemble averages of $c(A_{\y},B_{\y})$ and $r(A_{\y},B_{\y})$, for each SNR, before and after performing the reduced-rank approximation. 

\begin{figure}[ht]
	\centering
	\includegraphics[width = .6\textwidth]{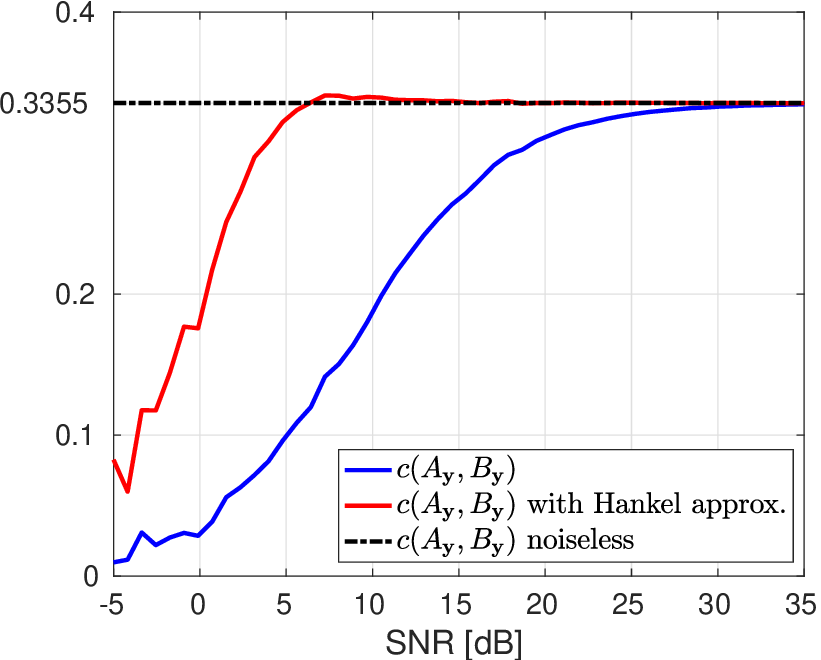}
	\caption{ $c(A_{\y},B_{\y})$ vs. SNR with and without reduced-rank Hankel approximation.}
	\label{Fig:Centre}
\end{figure}

We observe from Fig. \ref{Fig:Centre} and \ref{Fig:radius}  that  $W_{\|.\|_F}(A_\y, B_\y)$ gets enlarged as the SNR deteriorates. For instance, the expected value of $r(A_{\y},B_{\y})$ grows as the noise power increases. However, the effect is dramatically reduced when we perform the reduced-rank approximation of $H_\y$. Also, the center $c(A_{\y},B_{\y})$ remains close to its noiseless value when the Hankel approximation is performed. In general, the rank reduction procedure contributes to mitigate the effect of noise on the boundaries of the numerical range. Since $W_{\|.\|_2}(A_\y, B_\y)$ is a subset of $W_{\|.\|_F}(A_\y, B_\y)$, similar conclusions hold for the numerical range defined with the 2-norm. This is an important feature, since the classification process is defined by the boundaries of $W_{\|.\|_2}(A_\y, B_\y)$ and its ability to enclose the complex frequencies from one class and to leave aside the complex frequencies from other classes. 

Although we are showing these results on a particular example, the same conclusions were observed elsewhere. An important note to make is that this was the case when the approximated $H_\y$ kept the Hankel structure. Simple truncation of the singular value decomposition of $H_\y$ did not perform adequately. 
		
	\begin{figure}[ht]
		\centering
		\includegraphics[width = .6\textwidth]{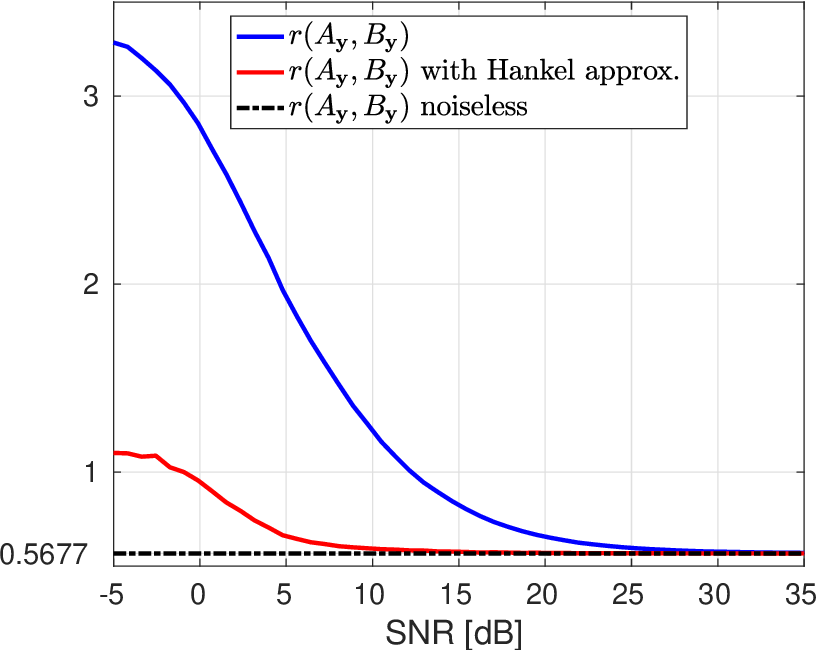}
		\caption{$r(A_{\y},B_{\y})$ vs. SNR with and without reduced-rank Hankel approximation.}
		\label{Fig:radius}
	\end{figure}

\section{Numerical results}
\label{Sect:Num_results}	

\subsection{Generalized likelihood test}
Using the statistical description of $\y_t$, we can formulate a hypothesis testing problem and solve it by using a likelihood ratio test. We will use the GLRT procedure to assess the performance of the classification methodology proposed above. In this section, we succinctly describe the application of the GLRT in the context described so far. 

Let the observed signal $\y_t$ be from class $Z_1$ as before, and consider a second class of signals, $Z_2$ as the candidate class. Each signal in each class has its own vector of residues in \eqref{Eq:MPM1}. As such,  $\mathbf{c}_{i,1}$ is related to signals of class $Z_1$, and $\mathbf{c}_{i,2}$ to signals of class $Z_2$. In general, $\mathbf{c}_{i,1}$ and $\mathbf{c}_{i,2}$ are  unknown vectors. To robustify the classification, a generalized likelihood test (GLRT) is formulated as in \cite{Mooney1998}. Given $Z_1 = \{z_{i,1}, i=1, \ldots p\}$ and $Z_2 = \{z_{i,2}, i=1, \ldots p\}$, and 
the observed signal $\y_0, \ldots, \y_l$, $l=s+n-1$, we define the hypothesis $\mathbb{H}_1$ and $\mathbb{H}_2$ as follows
\[
\mathbb{H}_i : \left[\begin{array}{c}
\y_0^t \\ \vdots \\\y_l^t
\end{array} \right] = F_i \left[\begin{array}{c}
\mathbf{c}_{1,i}^t \\ \vdots \\ \mathbf{c}_{p,i}^t 
\end{array} \right], \text{ where }
F_i = \left[\begin{array}{ccc}
z_{1,i}^0 & \cdots & z_{p,i}^0  \\ 
 & \ddots & \\
z_{1,i}^l & \cdots & z_{p,i}^l  \\ 
\end{array} \right] ,
\]
is a matrix formed from the complex frequencies associated with $Z_1$ and $Z_2$ respectively. Notice that $z_{k,i}$, $k=1, \ldots, p$, $i=1, 2$, are known frequencies. Then, the GLRT is posed as 
\[
\frac{\max_{\mathbf{c}_{i,1}}p(\y_0, \ldots, \y_l|H_1)}{\max_{\mathbf{c}_{i,2}}p(\y_0, \ldots, \y_l|H_2)} \begin{aligned} & H_1 \\[-5pt]
& \gtreqless\\[-5pt]
& H_2
\end{aligned} \quad 1 .
\]

\subsection{Known model order}	

In this section, we test the classification procedure when the model order is known a priori. For that, we 
build the observed signal $\y_t$ from class $Z_1$ as before for different SNRs. For each SNR, the algorithm selects an appropriate scaling factor as stated in Algorithm \ref{Algo}.  
We consider the candidate class $Z_2$
		\begin{equation}
			\begin{aligned}
			Z_2 =  \big\{ & 0.0429 \pm 0.0825j; -0.4130 \pm 0.1176j;-0.3118 \pm 0.2127j;\\ & -0.1951 \pm 0.3642j;-0.3385 \pm 0.1249j\big\}.
			\end{aligned} 
			\label{Eq:Z2}
		\end{equation}

Fig. \ref{Fig:A1_B1} shows a particular realization of $W_{\|.\|_2}(A_\y,B_\y)$ for different SNRs. We have also incorporated on this figure the sets $Z_1$ and $Z_2$. Notice that $Z_1$ is always within the boundaries of $W_{\|.\|_2}(A_{\y},B_{\y})$. In this case, the set $Z_2$ is close to the boundaries of \Ww$(A_{\y};B_{\y})$ but it is not included on it. This may not be the case on all the realizations. Erroneous classifications are expected on those realizations.

		\begin{figure}[ht]
			\centering
			\begin{subfigure}[t]{0.22\textwidth}
				\centering
				\includegraphics[width =\textwidth]{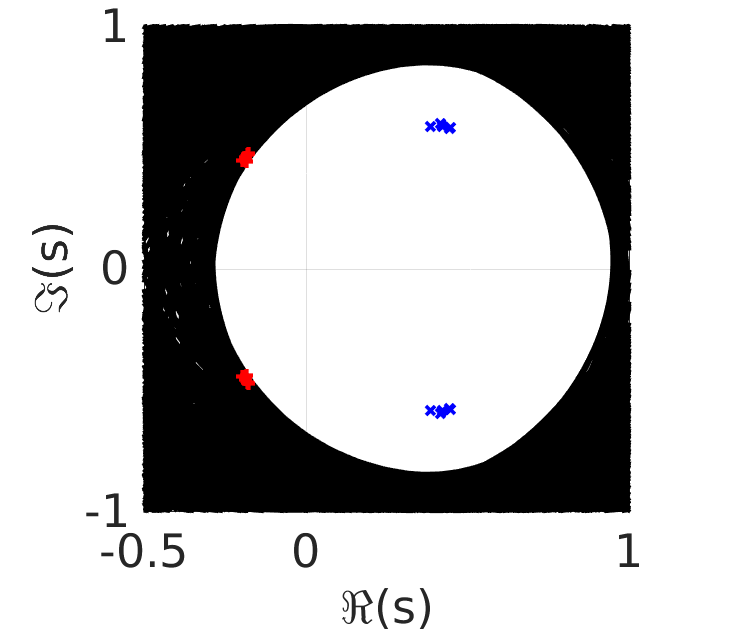}
				\caption{ SNR = 0dB.}
				\label{Fig:A1_B1_15dB} 
			\end{subfigure}
			~
			\begin{subfigure}[t]{0.22\textwidth}
				\centering
				\includegraphics[width =\textwidth]{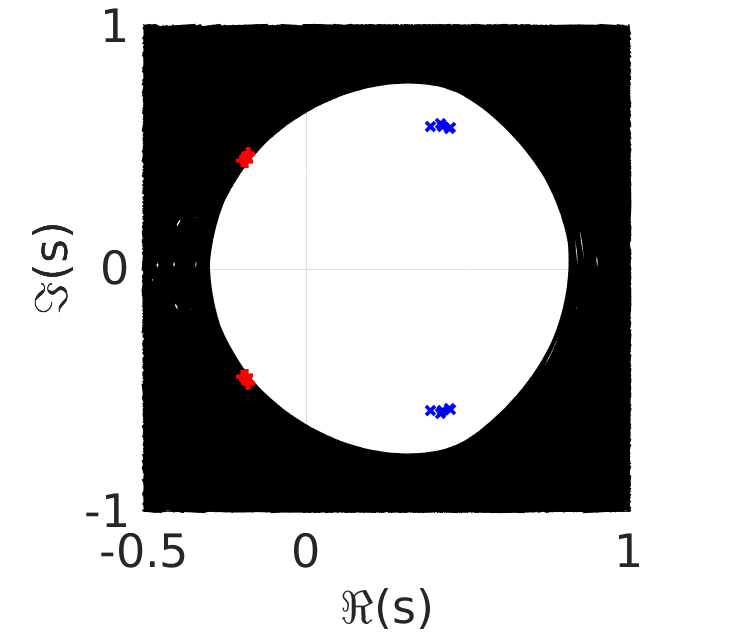}
				\caption{SNR = 10dB.}
				\label{Fig:A1_B1_25dB} 
			\end{subfigure}
			~				
			\begin{subfigure}[t]{0.22\textwidth}
				\centering
				\includegraphics[width =\textwidth]{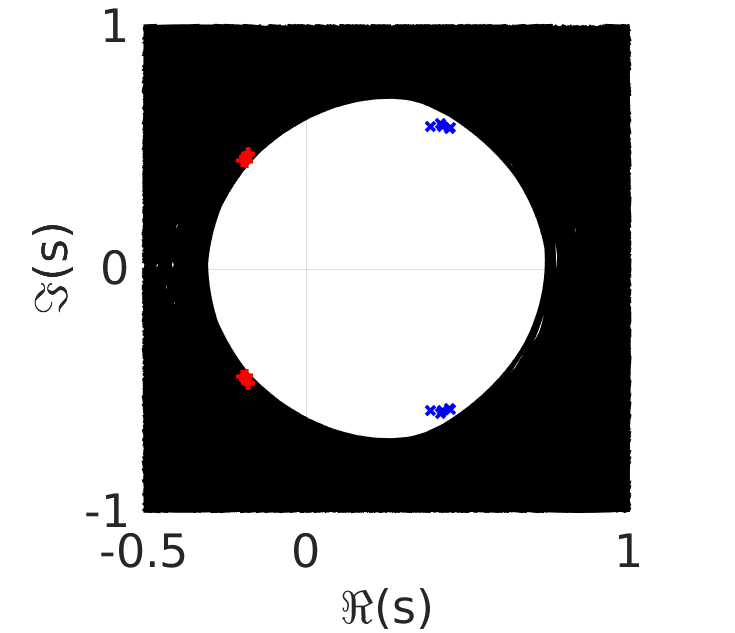}
				\caption{SNR = 20dB.}
				\label{Fig:A1_B1_30dB} 
			\end{subfigure}
			~
			\begin{subfigure}[t]{0.22\textwidth}
				\centering
				\includegraphics[width =\textwidth]{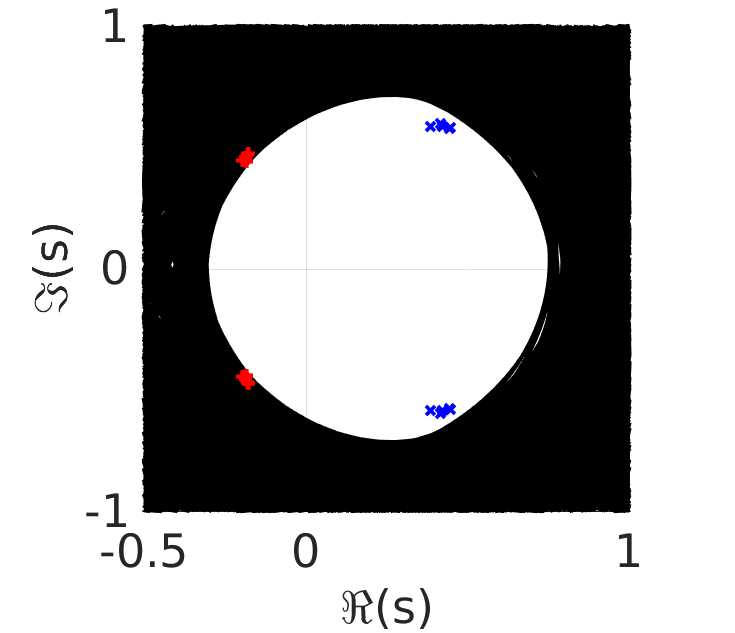}
				\caption{SNR = 30dB.}
				\label{Fig:A1_B1_35dB} 
			\end{subfigure}
			\caption{Numerical Range \Ww$(A_{\y},B_{\y})$ for different SNR.  Sets $Z_1 (\text{blue }\times)$, $Z_2(\text{red }+)$.}
			\label{Fig:A1_B1}
		\end{figure}

We compare the performances of CRNR and GLRT by computing the error rates. Fig. \ref{Fig:Comparation_MPM_RNR} shows the results. For low SNRs, from -5dB to 0dB, GLRT and CRNR have similar performances. However, when the SNR increases, there is approximately 2 dB difference between both methods.

		\begin{figure}[ht]
			\centering
			\includegraphics[width = 0.6 \textwidth]{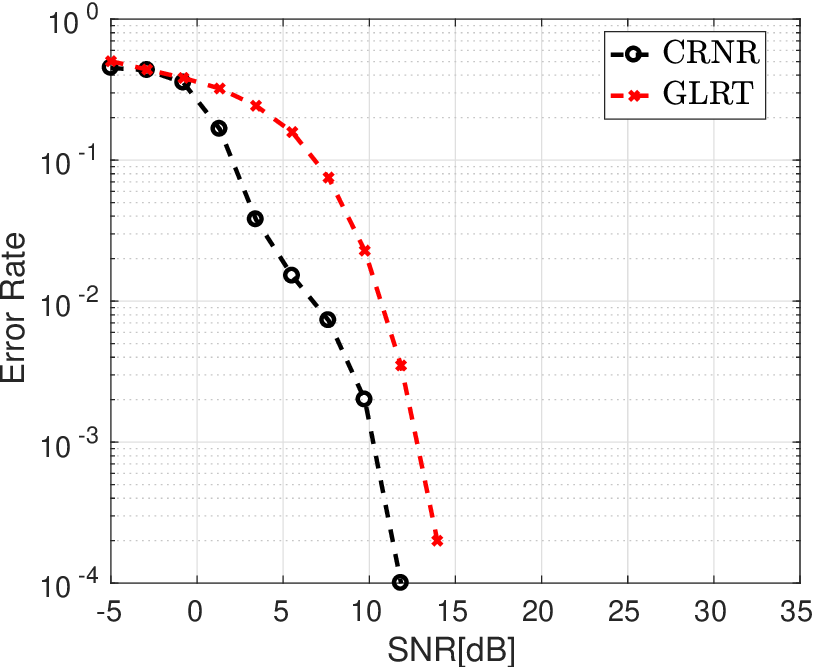}
			\caption{Error rates for observed signal $\y_t\in Z_1$ and classification set $Z_2$.}
			\label{Fig:Comparation_MPM_RNR}
		\end{figure}

\subsection{An unknown number of components}		


In this section, we compare the performance of the CRNR with the GLRT when $p\neq M$. The observed signal is as before built from $Z_1$. However, in this case, we estimate the order of the model using \cite{Gavish2014} and with this order we consider the number of complex frequencies for the classification test.
As before, we run a Montecarlo experiment with 10000 realizations.  Both methods are applied to each realization. The results are shown in Fig. \ref{Fig:Comparation_GLRT_RNR_1}. We observe that the CRNR outperforms the GLRT, which suffers when a wrong model order is considered. 		
		
		\begin{figure}[ht]
			\centering
			\includegraphics[width = 0.6\textwidth]{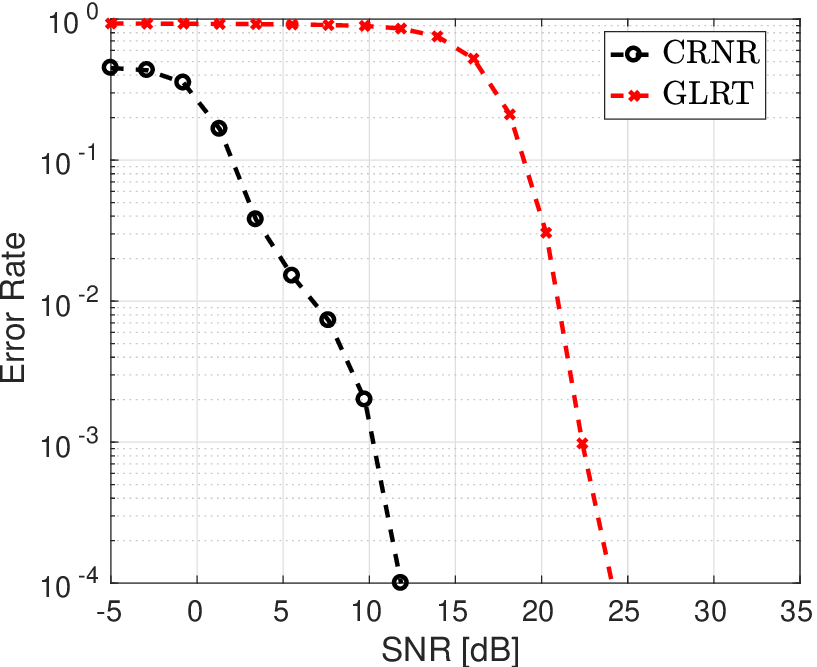}
			\caption{Error rates for observed signal $\y_t$ and candidate set $Z_2$.}
			\label{Fig:Comparation_GLRT_RNR_1}
		\end{figure}		
		
\subsection{Scattering problem}\label{Sect:SEM}
	
A scattering phenomenon occurs when an electromagnetic wave strikes a small object and the propagation continues over new directions, different from the incident one. Accurate mathematical models are very complex as they require the solution of Maxwell's equations for appropriate boundary conditions. The author in \cite{Baum1971} introduced an approximation known as the singularity expansion method (SEM) to describe the scattered field arriving at a given point in space as the combination of two components: a first one dominated by direct reflexions from the object, called the early-time component; a second one, which is due to induced surface currents on the object or cavity waves, called the late-time component. The SEM models the signal scattered from a target and observed from different angles as follows:	
		\begin{equation}
			\y_t = \sum_{i = 1}^{M} \mathbf{c}_iz_i^{(t-\tau_i)} + \hat{\y}_t, \quad t = 0,1,\ldots
			\label{Eq:SEM}
		\end{equation}
\noindent where $z_i$ and $\mathbf{c}_i$ are the complex natural resonances of the scattering object and the associated vectors of residues, $\tau_i$ is a constant delay, usually unknown, and $\hat{\y}_t$ is an entire function that represents the early-time component.  

Notice that the residues associated with each resonant modes are aspect dependent. This implies that some natural mode may not be significantly observed when its associated residue has very low energy or even zero along certain observation angle. Having multiple observation angles will be beneficial to overcome this difficulty.  

Internal resonances are associated with the composition of the particular object being illuminated. For example, Fig. \ref{Figure_Mie_signal} shows the backscattered signals for two different spheres of the same radius and different material. In the past, it has been proposed to use internal resonances to classify objects using scattering fields \cite{Bannis2014,Lee2012,Turhan-Say2003}.  All of these methods work in the resonance region, namely the late-time component of the signal. However, a clear separation between the early-time and the late-time components is not possible \cite{Pearson1982}. Usually, involved frequency-time techniques are implemented to achieve the signal separation in these two portions \cite{Chen2014}.

		\begin{figure}
			\centering
			\begin{subfigure}[t]{0.45\textwidth}
				\centering
				\includegraphics[width=\textwidth]{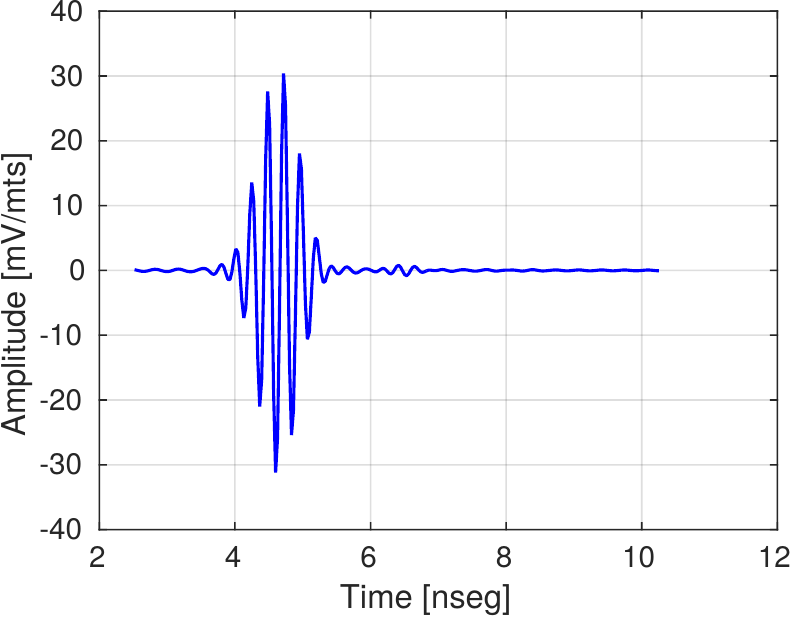}
				\caption{$\varepsilon_{1} = 2.12 - 0.053j$.}
				\label{Figure_Mie_signal_1}
			\end{subfigure}
			~
			\begin{subfigure}[t]{0.45\textwidth}
				\centering
				\includegraphics[width=\textwidth]{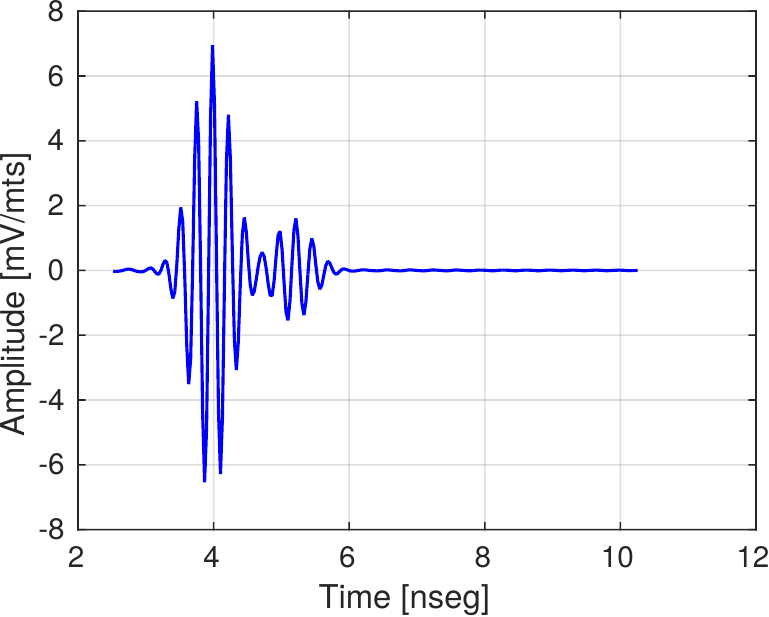}
				\caption{$\varepsilon_{2} = 7.0 - 5.25j$.}
				\label{Figure_Mie_signal_2}
			\end{subfigure}
			\caption{Scattered electric field obtained from spheres made of different materials.}	
			\label{Figure_Mie_signal}
		\end{figure}
		
To perform our study, we will consider spherical objects made of a homogeneous material and illuminated by a plane wave. In this case, 
the solution to Maxwell's equations is known as the Mie series \cite{Bohren1983}. We use this result to simulate the scattered field from 
a sphere with radius, $a = 0.07$ meters made of dielectric material with dielectric constant $\varepsilon$. We observe the response to a Gaussian pulse in the frequency range $1GHz$ to $5.9GHz$ for $K=3$ different observation angles, $\phi = [0^\circ,45^\circ,180^\circ]$. To set the classification problem, we consider two different materials, Class 1 with $\varepsilon_{1} = 2.12 - 0.053j$, and Class 2  with $\varepsilon_{2} = 7.0 - 5.25j$.

Fig. \ref{Fig:NatModes} shows some of the natural frequencies of Class 1 and Class 2. Both classes have a very large number of natural frequencies. However, it is possible to select disjoint subsets to characterize each class. The appropriate selection of the candidate modes is a delicate problem that determines the performance of the selection technique. In this paper, we choose a subset of $p=10$ modes to make the candidate class $Z_{\varepsilon_{2}}$ as shown in Fig. \ref{Fig:Natmodes_Sphere2}.
		
		\begin{figure}
			\centering
			\begin{subfigure}[t]{0.45\textwidth}
				\centering
				\includegraphics[width = \textwidth]{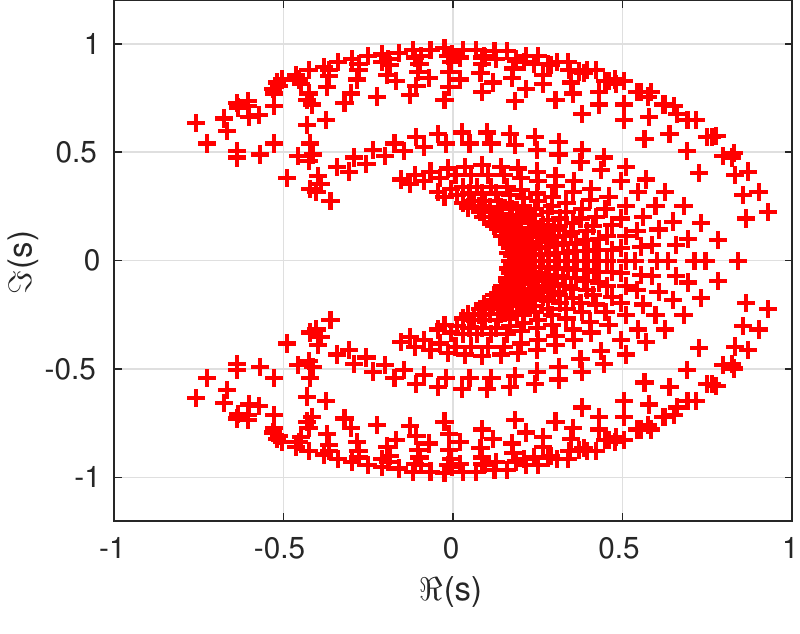}
				\caption{Natural modes for Sphere 1.}
				\label{Fig:Natmodes_Sphere1}
			\end{subfigure}
			~
			\begin{subfigure}[t]{0.45\textwidth}
				\centering
				\includegraphics[width = \textwidth]{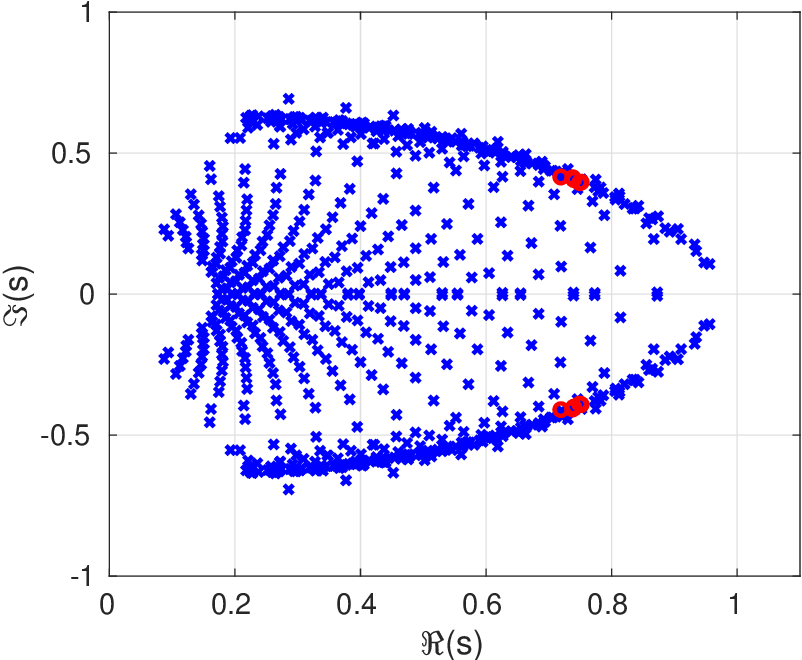}
								\caption{Natural modes ($\times$-blue) for Sphere 2, and $Z_{\varepsilon_{2}}$ ($\circ$-red).}
				\label{Fig:Natmodes_Sphere2}
			\end{subfigure}
			\caption{Natural modes for Sphere 1 and 2}
			\label{Fig:NatModes}
		\end{figure}

Using the observed signal from Sphere 1 and the classification set $Z_{\varepsilon_{2}}$, we follow Algorithm \ref{Algo}. 
We compare the performance of CRNR with the GLRT approach. In this case, we have used $p= 10$ known resonances to build the test. Fig. \ref{Fig:Error_rate_MIE} shows the classification error rates obtained for these methods.
The CRNR  outperforms the GLRT, which shows a poor performance when a reduced order model is considered. 

		\begin{figure}[ht]
			\centering
			\includegraphics[width = 0.6\textwidth]{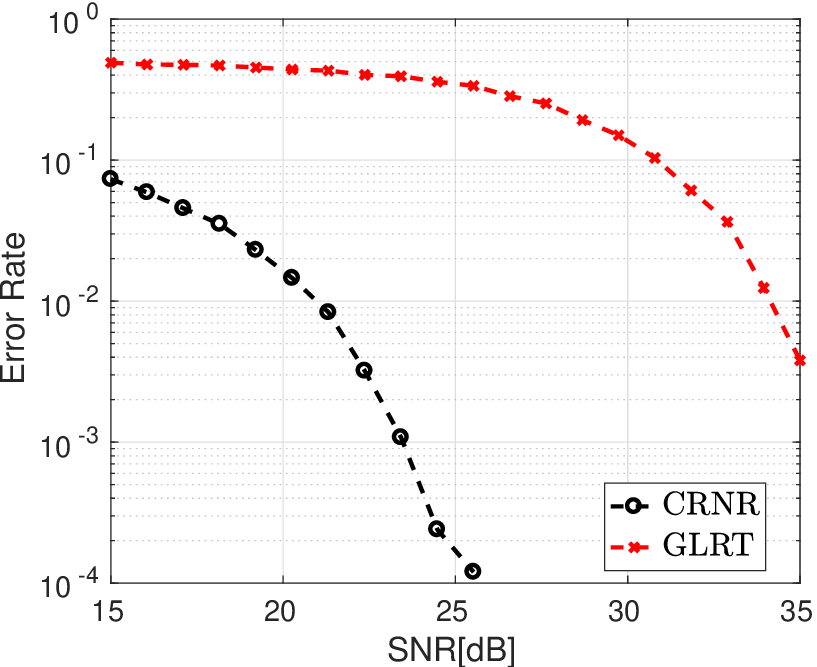}
			\caption{Error rate for observation pair $(A_{\varepsilon_{1}},B_{\varepsilon_{1}})$ and candidate set $Z_{\varepsilon_{2}}$}
			\label{Fig:Error_rate_MIE}
		\end{figure}

\section{Conclusion}

In this work, we have proposed a novel classification algorithm for signals composed of sums of exponentially damped sinusoids. Instead of estimating the frequencies and compared them to theoretical frequencies in a database, we have formulated an inverse problem. That is, given a set of frequencies, we evaluate if they correspond to the signal that we are observing. For that, we test the boundaries of the rectangular numerical range of the matrix pencil associated with the observed signal. With this proposition, we overcome some drawbacks observed in the matrix pencil method, just like the instability of the computation of the eigenvalues and the estimation of the model order.

We have applied this new scheme to the classification of dielectric material using the backscattered response of spherical objects when illuminated with ultra-wideband pulses. We have shown that the CRNR shows a good performance. As we mentioned, the resonant frequencies are independent of the aspect angle and polarization but their associated residues are not. So, theoretically we will have a unique set of infinite resonant frequencies associated with an object but when we measure the scattering signals we will observe only a subset of resonant frequencies. This makes the choice of candidate frequencies a critical step for the classification scheme. Nevertheless, the proposed scheme allows reducing by at least an order of magnitude the classification error compare to other existing techniques.


%

\appendix
\section{Reduced rank Hankel matrix approximation}\label{appendix1}
In this section, we summarize the procedure presented in \cite{Hua1991,Li1997}. Consider $H$ as in section \ref{Sect:MPM}, and define its singular value decomposition 
		\begin{equation}
			H = \sum_{i=1}^{n+1} \sigma_i\u_i\vv_i^H,
			\label{Eq:MPM11}
		\end{equation}
		\noindent where $\u_i\in\C^m$, $\vv_i\in\C^{n+1}$, and $\sigma_1\geq\sigma_2\geq\cdots\geq\sigma_{n+1}\geq 0$. 

Under the presence of noise, $H$ is full rank and has Hankel structure. The idea is to obtain a reduced-rank approximation that preserves the Hankel structure. This is important since it has been noticed that there is a one-to-one correspondence between data sequences consisting of superimposed damped sinusoids and rank-deficient Hankel matrices \cite{Li1997}.

Let us define two matrix operators. The first one, performs an $M$-order rank reduction, i.e., $\mathcal{L} : \C^{m\times (n+1)}\to \C^{m\times (n+1)}$,
		\begin{equation}
			\mathcal{L}\{H\} = \sum_{i=1}^{M}\sigma_i\u_i\vv_i^H. 
			\label{Eq:MPM5}
		\end{equation}
The second operator, obtains a Hankel approximation, i.e., $\mathcal{T}:\C^{m\times (n+1)}\to \C^{m\times (n+1)}$,
$			Y = \mathcal{T}\{X\},$
		\noindent where the $(k,l)-$th element of $Y$ ($k= 1,\ldots, m$, $l= 1,\ldots, n+1$) is obtained by averaging the elements of the anti-diagonal of $X$ as 
		\begin{equation}
			y_{k,l}  = \frac{1}{|\Lambda_{k+l}|}\sum_{(k',l')\in\Lambda_{k+l}}x_{k'l'}.
			\label{Eq:MPM7}
		\end{equation}

		\noindent Here, $\Lambda_{k+l}$ is the set of indices  corresponding to the $(k+l)$-th anti-diagonal of $X$, and $|\Lambda_{k+l}|$ is the cardinality of this set. 

		To obtain a rank-deficient Hankel matrix we use an iterative approach by applying alternately these two operators. First, we make a low-rank approximation and then we apply the Hankel approximation procedure. The procedure is repeated until the following stopping criterion is met

		\begin{equation}
			\|(\mathcal{T}\mathcal{L})^r\{X\} - \mathcal{L}(\mathcal{T}\mathcal{L})^{r-1}\{X\}\|_F < \epsilon,
			\label{Eq:MPM8}
		\end{equation}  

		\noindent where $r$ is the number of iterations performed and $\epsilon$ is a constant previously defined. 

\section*{Acknowledgment}
Raymundo Albert  is recipient of a doctoral scholarship from CONICET - Argentina. This work was part of the projects UBACyT 20020130100751BA, funded by the University of Buenos Aires, and  PIP-112 201101 00997, funded by CONICET . 





%
\section*{References}
\bibliography{bibliografia.bib} 

%

%
%
%




\end{document}